\newcommand{\comp}{\leftrightarrow}
\newcommand{\ord}{\operatorname{ord}}
\def\comp{\leftrightarrow}
\newcommand{\notcomp}{\not\kern-0.2ex\comp}
\begin{document}

\allowdisplaybreaks

\renewcommand{\thefootnote}{$\star$}

\renewcommand{\PaperNumber}{00?}

\FirstPageHeading

\ShortArticleName{Sharply Orthocomplete Ef\/fect Algebras}

\ArticleName{Sharply Orthocomplete  Ef\/fect Algebras\footnote{This paper is a
contribution to the Proceedings of the 6-th Microconference
``Analytic and Algebraic Me\-thods~VI''. }}

\Author{Martin KALINA, Jan PASEKA and Zdenka RIEČANOVÁ}

\AuthorNameForHeading{M. Kalina, J. Paseka and Z. Riečanová}

\Address{Department of Mathematics,
Faculty of Civil Engineering, Slovak University \\
of Technology,
Radlinsk\' eho 11, SK-813~68~Bratislava, Slovakia}
\Email{\href{kalina@math.sk}{kalina@math.sk}} 

\Address{Department of Mathematics
 and Statistics,
Faculty of Science,
Masaryk University,\\
Kotl\'a\v{r}sk\'a~2,
CZ-611~37~Brno, Czech Republic}
\Email{\href{mailto:paseka@math.muni.cz}{paseka@math.muni.cz}} 

\Address{Department of Mathematics, Faculty of Electrical Engineering 
and Information\\ Technology,
Slovak University of Technology, Ilkovi\v{c}ova~3, SK-812~19~Bratislava,\\
Slovak Republic}
\Email{\href{zdena.riecanova@gmail.com}{zdena.riecanova@gmail.com}}


\Abstract{Special types of ef\/fect algebras $E$ called 
sharply dominating and S-dominating were introduced by S. Gudder 
in \cite{gudder1,gudder2}. We prove statements about 
connections between sharp orthocompleteness, sharp dominancy 
and completeness of $E$. Namely we prove that in every sharply orthocomplete 
S-dominating effect algebra $E$ the set of sharp elements and 
the center of $E$ are complete lattices bifull in $E$. If an 
Archimedean atomic  lattice effect algebra 
$E$ is sharply orthocomplete then it is complete.}

\Keywords{ef\/fect algebra;  sharp element; central element; 
block; sharply dominating; S-dominating; sharply orthocomplete}

\Classification{06C15;  03G12; 81P10}

\section{Introduction}\label{intro}

An algebraic structure called an effect algebra has been introduced by D.J. 
Foulis and M.K. Bennett (1994). The advantage of an effect algebra is that 
effect algebras provide a mechanism for studying quantum effects, or more general, 
in non-classical probability theory their elements represent events that may 
be unsharp or pairwise non-compatibble. Lattice effect algebras are in 
some sence a nearest common generalization of orthomodular lattices \cite{kalmbach} 
that may include non-compatible pairs of elements, and $MV$-algebras \cite{C.C.Ch} 
that may include unsharp elements. More precisely a lattice effect algebra $E$  
is an orthomodular lattice iff every element of $E$ is sharp 
(i.e., $x$ and "non $x$" are disjoint) and it is 
an $MV$-effect algebra iff every pair of elements of E is compatible. Moreover, 
in every lattice effect algebra $E$ the set of sharp elements is an orthomodular 
lattice (\cite{ZR57}), 
and E is a union of its blocks (i.e., maximal 
subsets of pairwise compatible elements  that are $MV$-effect algebras 
(see \cite{ZR56})). Thus a lattice effect algebra $E$ is a Boolean algebra 
iff every pair of elements are compatible and every element of $E$ is sharp. 

However, non-lattice ordered effect algebra $E$ is so general 
that its set $S(E)$ of sharp 
elements may  form neither an orthomodular lattice nor any regular 
algebraic structure. S. Gudder (see \cite{gudder1,gudder2}) introduced 
special types of effect algebras $E$ called sharply dominating, whose 
set $S(E)$ of sharp elements forms an 
orthoalgebra and also so called S-dominating ,whose set $S(E)$ of sharp elements 
forms an orthomodular lattice. In \cite{gudder1}, S.~Gudder showed that a standard 
Hilbert space effect algebra ${\mathcal E}(H)$ of bounded operators on a Hilbert space $H$ 
between zero and identity operators (with partially defined usual operation + ) are 
S-dominating. Hence S-dominating effect algebras may be useful abstract models 
for sets of quantum effects in physical systems.

We study these two special kinds of effect algebras. We show properties of 
some remarkable sub-effect algebras of such effect algebras $E$ satisfying the 
condition that $E$ is sharply orthocomplete. Namely properties of their blocks, 
sets of sharp elements and their centers. It is worth to note that in 
\cite{kopr} it was proved that there are even Archimedean atomic $MV$-effect 
algebras which are not sharply dominating hence they are not S-dominating.

\section{Basic def\/initions and some known facts}\label{basic}

\begin{definition}[\cite{FoBe}]\label{def:EA}
A partial algebra $(E;\oplus,0,1)$ is called an {\em effect algebra} if
$0$, $1$ are two distinct elements and $\oplus$ is a partially
def\/ined binary operation on $E$ which satisfy the following
conditions for any $x,y,z\in E$:
\begin{description}\itemsep=0pt
\item[$(Ei)$\phantom{iii}] $x\oplus y=y\oplus x$ if $x\oplus y$ is def\/ined,
\item[$(Eii)$\phantom{ii}] $(x\oplus y)\oplus z=x\oplus(y\oplus z)$  if one
side is def\/ined,
\item[$(Eiii)$] for every $x\in E$ there exists a unique $y\in
E$ such that $x\oplus y=1$ (we put $x'=y$),
\item[$(Eiv)$\phantom{i}] if $1\oplus x$ is def\/ined then $x=0$.
\end{description}

\end{definition}

We often denote the ef\/fect algebra $(E;\oplus,0,1)$ brief\/ly by
$E$. On every ef\/fect algebra $E$  the partial order
$\le$  and a partial binary operation $\ominus$ can be
introduced as follows:
\[
x\le y  \mbox{ and }  y\ominus x=z  \mbox{ if\/f } x\oplus z
\mbox{ is def\/ined and } x\oplus z=y .
\]

If $E$ with the def\/ined partial order is a lattice (a complete
lattice) then $(E;\oplus,0,1)$ is called a {\em lattice effect
algebra} ({\em a complete lattice effect algebra}).

\begin{definition}\label{subef}
Let $E$ be an  ef\/fect algebra.
Then $Q\subseteq E$ is called a {\em sub-effect algebra} of  $E$ if
\begin{enumerate}\itemsep=0pt
\item[$(i)$] $1\in Q$,
\item[$(ii)$] if out of elements $x,y,z\in E$ with $x\oplus y=z$
two are in $Q$, then $x,y,z\in Q$.
\end{enumerate}
If $E$ is a lattice ef\/fect algebra and $Q$ is a sub-lattice and a sub-ef\/fect
algebra of $E$ then $Q$ is called a~{\em sub-lattice effect algebra} of $E$.
\end{definition}

Note that a sub-ef\/fect algebra $Q$
(sub-lattice ef\/fect algebra $Q$) of an  ef\/fect algebra $E$
(of a~lattice ef\/fect algebra $E$) with inherited operation
$\oplus$ is an  ef\/fect algebra (lattice ef\/fect algebra)
in its own right.


For an element $x$ of an ef\/fect algebra $E$ we write
$\ord(x)=\infty$ if $nx=x\oplus x\oplus\dots\oplus x$ ($n$-times)
exists for every positive integer $n$ and we write $\ord(x)=n_x$
if $n_x$ is the greatest positive integer such that $n_xx$
exists in $E$.  An ef\/fect algebra $E$ is {\em Archimedean} if
$\ord(x)<\infty$ for all $x\in E$.

A minimal nonzero element of an ef\/fect algebra  $E$
is called an {\em atom}  and $E$ is
called {\em atomic} if under every nonzero element of
$E$ there is an atom.

For a poset $P$ and its subposet $Q\subseteq P$ we denote,
for all $X\subseteq Q$, by $\bigvee_{Q} X$ the join of
the subset $X$ in the poset $Q$ whenever it exists.

We say that a f\/inite system $F=(x_k)_{k=1}^n$ of not necessarily
dif\/ferent elements of an ef\/fect algebra $(E;\oplus,0,1)$ is
{\em orthogonal} if $x_1\oplus x_2\oplus \cdots\oplus
x_n$ (written $\bigoplus\limits_{k=1}^n x_k$ or $\bigoplus F$) exists
in $E$. Here we def\/ine $x_1\oplus x_2\oplus \cdots\oplus x_n=
(x_1\oplus x_2\oplus \cdots\oplus x_{n-1})\oplus x_n$ supposing
that $\bigoplus\limits_{k=1}^{n-1}x_k$ is def\/ined and
$\bigoplus\limits_{k=1}^{n-1}x_k\le x'_n$. We also def\/ine
$\bigoplus \varnothing=0$.
An arbitrary system
$G=(x_{\kappa})_{\kappa\in H}$ of not necessarily dif\/ferent
elements of $E$ is called {orthogonal} if $\bigoplus K$
exists for every f\/inite $K\subseteq G$. We say that for an {orthogonal}
system $G=(x_{\kappa})_{\kappa\in H}$ the
element $\bigoplus G$ (more precisely $\bigoplus_{E} G$) exists if\/f
$\bigvee\{\bigoplus K
\mid
K\subseteq G$ is f\/inite$\}$ exists in $E$ and then we put
$\bigoplus G=\bigvee\{\bigoplus K\mid K\subseteq G$ is
f\/inite$\}$. (Here we write $G_1\subseteq G$ if\/f there is
$H_1\subseteq H$ such that $G_1=(x_{\kappa})_{\kappa\in
H_1}$).

We call an ef\/fect algebra $E$ {\em orthocomplete} \cite{JePu} if every 
orthogonal system $G=(x_{\kappa})_{\kappa\in H}$ of elements of $E$ 
has the sum $\bigoplus G$. It is known that every orthocomplete 
Archimedean lattice ef\/fect algebra $E$  is a complete lattice 
(see \cite[Theorem 2.6]{ZR60}).

Recall that elements $x, y$ of a lattice ef\/fect algebra
$E$ are called {\em compatible} (written
$x\comp y$) if\/f
$x\vee y=x\oplus(y\ominus(x\wedge y))$ (see \cite{kopkabool}).
$P\subseteq E$ is a {\it set of pairwise compatible elements}
if $x\leftrightarrow y$ for all $x,y\in P$.  $M\subseteq E$ is
called a {\it block} of $E$ if\/f $M$ is a maximal subset of
pairwise compatible elements. Every block of a lattice ef\/fect
algebra $E$ is a sub-ef\/fect algebra and a~sub-lattice of $E$ and
$E$ is a union of its blocks (see \cite{ZR56}).
Lattice ef\/fect algebra with a unique block is called an
{\it $MV$-effect algebra}. Every block of a lattice ef\/fect
algebra is an $MV$-ef\/fect algebra in its own right.

An element $w$ of an ef\/fect algebra $E$ is called
{\em sharp} (see (\cite{gudder1,gudder2})) if $w\wedge w'=0$.

\begin{definition} (\cite{gudder1,gudder2})
An  effect algebra $E$ is called 
{\bfseries{\em sharply dominating}} if 
for every $x\in E$\ {there exists} 
$\widehat{x}\in S(E)$ such that 
\begin{center}$\widehat{x}=\bigwedge_{E}\{w\in S(E) \mid x\leq w\}=
\bigwedge_{S(E)}\{w\in S(E) \mid x\leq w\}.$
\end{center}

Note that clearly $E$ is sharply dominating iff 
for every $x\in E$\ {there exists} 
$\widetilde{x}\in S(E)$ such that 
\begin{center}$\widetilde{x}=\bigvee_{E}\{w\in S(E) \mid x\geq w\}=
\bigvee_{S(E)}\{w\in S(E) \mid x\geq w\}.$ 
\end{center} 

A sharply dominating effect algebra $E$ 
is called {\bfseries{\em S-dominating}} (\cite{gudder2})
if $x \wedge w$ exists for every $x\in E$, 
$w\in S(E)$.

\end{definition}

The well known fact is that in every S-dominating ef\/fect algebra $E$
the subset $S(E)=\{w\in E\mid w\wedge w'=0\}$  of sharp elements 
of $E$ is a sub-ef\/fect 
algebra of $E$ being an orthomodular lattice (see \cite[Theorem 2.6]{gudder2}). 
Moreover if for $D\subseteq S(E)$ the element $\bigvee_{E}D$ exists then
 $\bigvee_{E}D\in S(E)$  hence $\bigvee_{S(E)}D=\bigvee_{E}D$.
We say that $S(E)$  is a full sublattice of $E$ (see \cite{ZR57}).

Let $G$ be a sub-effect algebra of an effect algebra $E$. We say 
that $G$ is {\bfseries{\em bifull in}} $E$, if, for any 
$D\subseteq G$ the element
$\bigvee_{G} D$ exists if\/f the element $\bigvee_{E} D$ exists and
they are equal. Clearly, any bifull sub-effect algebra of $E$ is 
full but not conversely (see \cite{kalina}). 


The notion of a central element of an effect algebra
$E$ was introduced by Greechie-Foulis-Pulmannov\'a
\cite{GrFoPu}. An element $c\in E$ is called
{\em central} (see \cite{ZR51}) iff 
for every $x\in E$ there exist $x\wedge c$ and
$x\wedge c'$  and $x=(x\wedge c)\vee(x\wedge c')$. 
The {\em center} $C(E)$ of $E$ is the set of all central elements of $E$. 
Moreover, $C(E)$ is a Boolean algebra, see~\cite{GrFoPu}.  
If $E$ is a lattice ef\/fect algebra then $z\in E$ is
central if\/f $z\wedge z'=0$ and $z\comp x$ for all $x\in E$, see \cite{ZR52}.
Thus in a lattice ef\/fect algebra $E$, $C(E)=B(E)\cap
S(E)$, where $B(E)=\bigcap\{M\subseteq E\mid M\text{ is a block
of }E\}$ is called {\em compatibility center} of $E$.

An effect algebra $E$ is called 
{\bfseries{\em centrally dominating}} (see also \cite{foulis2009} for the notion 
{\em central cover}) if 
for every $x\in E$\ {there exists} 
${c}_x\in C(E)$ such that 
\begin{center}${c}_x=\bigwedge_{E}\{c\in C(E) \mid x\leq c\}=
\bigwedge_{C(E)}\{c\in C(E) \mid x\leq c\}.$
\end{center}

An element $a$ of a lattice $L$ is called 
{\em compact} iff, for any $D\subseteq L$, $a\leq \bigvee D$ implies 
$a\leq \bigvee F$ for some finite $F\subseteq D$.
A lattice $L$ is called {\em compactly generated} iff every 
element of $L$ is a join of compact elements.

\section{Sharply orthocomplete ef\/fect algebras}


In an effect algebra $E$ the set 
$S(E)=\{x\in E\mid x\wedge x'=0\}$ of sharp elements plays an 
important role. In some sense we can say that an effect algebra $E$ is a "smeared 
set $S(E)$" of its sharp elements, while unsharp effects are important in studies 
of unsharp measurements (\cite{FoBe, BLM}{}). 
S.Gudder proved (see \cite{gudder2}) that, in standard Hilbert space 
effect algebra ${\mathcal E}(H)$ of bounded operators $A$ on a Hilbert space $H$ between null 
operator and identity operator, which are endowed with usual $+$ 
defined iff $A+B$ 
is in ${\mathcal E}(H)$, the set $S({\mathcal E}(H))$ of sharp elements forms 
an orthomodular lattice of 
projections operators on $H$. Further in (see \cite[Theorem 2.2]{gudder2}) 
it was shown 
that in every sharply dominating effect algebra the set $S(E)$ is a sub-effect 
algebra of $E$. Moreover, in \cite[Theorem 2.6]{gudder1} it is proved that in every 
S-dominating effect algebra $E$ the set $S(E)$ is an orthomodular lattice. 
We are going to show that in this case $S(E)$ is bifull in $E$.

\begin{theorem} \label{shdisbifull} Let $E$ be an S-dominating  
 effect algebra.
Then $S(E)$ 
is bifull in $E$.
\end{theorem}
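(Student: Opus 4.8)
The plan is to verify the two implications contained in the definition of bifullness separately. For $D\subseteq S(E)$ I must show that $\bigvee_{S(E)}D$ exists if and only if $\bigvee_{E}D$ exists, and that in that case the two joins coincide. One implication is essentially already recorded in the text preceding the theorem: since $E$ is S-dominating, $S(E)$ is a full sublattice of $E$, so whenever $\bigvee_{E}D$ exists it automatically lies in $S(E)$ and equals $\bigvee_{S(E)}D$. Thus it remains to treat the reverse implication: assuming $s:=\bigvee_{S(E)}D$ exists in the orthomodular lattice $S(E)$, I must produce $\bigvee_{E}D$ and show it equals $s$.

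The element $s$ is an upper bound of $D$ in $E$, because $S(E)\subseteq E$ and $s$ already dominates $D$ in $S(E)$. So the whole problem reduces to proving that $s$ is the \emph{least} upper bound of $D$ in $E$: given any $u\in E$ with $d\le u$ for all $d\in D$, I need $s\le u$. Here I would invoke the dual description of sharp dominancy noted in the definition, namely that for each $x\in E$ there is a largest sharp element $\widetilde{x}\in S(E)$ below $x$, with $\widetilde{x}=\bigvee_{E}\{w\in S(E)\mid w\le x\}=\bigvee_{S(E)}\{w\in S(E)\mid w\le x\}$. Applying this to $u$ yields $\widetilde{u}\in S(E)$, and since $u$ bounds the set whose join defines $\widetilde{u}$, we have $\widetilde{u}\le u$.

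The key step is then the observation that $\widetilde{u}$ is an upper bound of $D$ inside $S(E)$. Indeed, every $d\in D$ is sharp and satisfies $d\le u$, so $d$ belongs to the set $\{w\in S(E)\mid w\le u\}$ whose join in $S(E)$ is $\widetilde{u}$; hence $d\le\widetilde{u}$. Since $s$ is the least upper bound of $D$ in $S(E)$, this forces $s\le\widetilde{u}$, and combining with $\widetilde{u}\le u$ gives $s\le u$, as required. Therefore $s=\bigvee_{E}D$, and bifullness follows.

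I expect the only genuine obstacle to be choosing the correct intermediate element. The naive attempt is to bound $s$ by the smallest sharp element $\widehat{u}$ \emph{above} $u$; but that element lies above $u$, so $s\le\widehat{u}$ gives no information. The resolution is instead to work with the largest sharp element $\widetilde{u}$ \emph{below} $u$, which sandwiches $s$ between $D$ and $u$. Everything else is the routine bookkeeping of least upper bounds.
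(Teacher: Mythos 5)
Your proof is correct. It reaches the same conclusion by a leaner route than the paper, and the difference is worth noting. The paper proves both implications from scratch, and in each one it takes an arbitrary upper bound $y$, forms the meet $y\wedge z$ with the candidate join $z$ (this is exactly where the S-dominating hypothesis, i.e.\ the existence of $x\wedge w$ for $w\in S(E)$, enters), and only then applies the greatest-sharp-element-below operator to $y\wedge z$, obtaining $z\le\widetilde{y\wedge z}\le y\wedge z\le z$ in one direction and $z=\widetilde{y\wedge z}$ in the other. You never form a meet: for the direction where $\bigvee_E D$ exists you invoke the fullness of $S(E)$ recorded in Section~2, and for the harder direction you apply $\widetilde{(\,\cdot\,)}$ directly to the upper bound $u$, getting $s\le\widetilde{u}\le u$ since every $d\in D$ is sharp and below $u$, hence below $\widetilde{u}$. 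Consequently your argument for the hard direction uses only sharp dominancy, not S-dominancy. In fact the same trick disposes of the other direction as well: if $z=\bigvee_E D$ exists, then each $d\in D$ lies below $\widetilde{z}$, so $z\le\widetilde{z}\le z$, giving $z=\widetilde{z}\in S(E)$, and $z$ is then automatically the least upper bound of $D$ in $S(E)$. Pushed through both halves this way, your method proves the stronger statement that $S(E)$ is bifull in every \emph{sharply dominating} effect algebra, whereas the paper's proof is tied to the meet condition in both halves; what the paper's version buys in exchange is self-containment, since it does not lean on the fullness assertion quoted from the literature.
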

\begin{proof} Let $S\subseteq S(E)$.

(1) Assume that $z=\bigvee_{S(E)} S\in S(E)$ exists. 
Let us show that $z$ is the least upper bound of $S$ 
in $E$. Let $y\in E$ be an upper bound of $S$.
Then $y\wedge z$ exists and it is an upper bound of $S$ as well. 
Hence, for any $s\in S$, $s\leq y\wedge z$. This 
yields that  $s\leq  \widetilde{y\wedge z} \leq  y\wedge z$, 
for all $s\in S$, $\widetilde{y\wedge z}\in S(E)$. 
Hence $z\leq \widetilde{y\wedge z} \leq  y\wedge z \leq z$. 
Then $z=y\wedge z \leq y$ i.e., $z$
 is really the least upper bound of $S$ in $E$.

(2) Conversely, let $z=\bigvee_{E} S\in E$ exists. 
Let $y\in S(E)$ be an upper bound of $S$ in $S(E)$. 
Then $y\wedge z$ exists and it is again an upper bound of $S$. 
As $E$ is sharply dominating, there exists a greatest 
sharp element $\widetilde{y\wedge z} \leq  y\wedge z$ 
and hence $s\leq  \widetilde{y\wedge z} \leq  y\wedge z$.
This gives that  $z=\widetilde{y\wedge z}\in S(E)$. 
Thus $z=\bigvee_{S(E)} S\in S(E)$.
\end{proof}

\begin{corollary} \label{shdisbifullcv} If $E$ is a sharply dominating  
 lattice effect algebra then $S(E)$ is bifull in $E$.
\end{corollary}

\begin{definition}\label{shoc}
An effect algebra $E$ is called 
{\bfseries{\em  sharply orthocomplete }}  
({\bfseries{\em  centrally orthocomplete }} (see \cite{foulis2009}))
if for any system
$(x_{\kappa})_{\kappa\in H}$ of elements of $E$ such that there 
exists an orthogonal system  $(w_{\kappa})_{\kappa\in H}, w_{\kappa}\in S(E)$ 
with $x_{\kappa}\leq w_{\kappa}$, ${\kappa\in H}$ 
(an orthogonal system  $(c_{\kappa})_{\kappa\in H}, c_{\kappa}\in C(E)$ 
with $x_{\kappa}\leq c_{\kappa}$, ${\kappa\in H}$) there exists 
$$\bigoplus \{x_{\kappa} \mid \kappa\in H\}=%
\bigvee_{E} \{\bigoplus_{E} \{x_{\kappa} \mid \kappa\in F\} \mid %
F\subseteq H, F \ \mbox{finite}\}.$$
\end{definition}

\begin{theorem}\label{shoiscc} Let $E$ be a sharply orthocomplete 
S-dominating effect algebra. Then  
\begin{enumerate}\itemsep=0pt
\item[$(i)$] $S(E)$ is a complete orthomodular lattice  bifull  in $E$. 
\item[$(ii)$] $C(E)$ is a complete Boolean algebra  bifull  in $E$. 
\item[$(iii)$] $E$ is 
centrally dominating and centrally orthocomplete. 
\item[$(iv)$] If $C(E)$ is atomic then
$\bigvee_{E}\{p\in C(E) \mid p\ \mbox{atom of}\ C(E)\}=1$.
\end{enumerate}
\end{theorem}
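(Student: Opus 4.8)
The plan is to bootstrap everything from part $(i)$, whose engine is sharp orthocompleteness together with Theorem~\ref{shdisbifull}. Recall that in an S-dominating effect algebra $S(E)$ is an orthomodular lattice, and every orthomodular lattice is an Archimedean lattice effect algebra: a nonzero sharp $w$ cannot satisfy $w\le w'$ (else $w=w\wedge w'=0$), so $2w$ is undefined and $\ord(w)=1$. Hence, by the known fact that every orthocomplete Archimedean lattice effect algebra is complete, it suffices to prove $S(E)$ is orthocomplete. First I would take an orthogonal system $(w_\kappa)_{\kappa\in H}$ in $S(E)$; since each $w_\kappa\in S(E)$ and $w_\kappa\le w_\kappa$, it is bounded by the orthogonal sharp system $(w_\kappa)$, so sharp orthocompleteness of $E$ gives that $\bigvee_E\{\bigoplus_{\kappa\in F}w_\kappa\mid F\subseteq H\text{ finite}\}$ exists. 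As $S(E)$ is a sub-effect algebra, every finite partial sum lies in $S(E)$, and by Theorem~\ref{shdisbifull} their join is computed inside $S(E)$; thus $\bigoplus_{S(E)}(w_\kappa)$ exists. So $S(E)$ is orthocomplete, hence complete, and bifull in $E$ by Theorem~\ref{shdisbifull}, proving $(i)$.

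For $(ii)$ I would first record that every central element is sharp, so $C(E)\subseteq S(E)$, and then identify $C(E)$ with the center of the complete orthomodular lattice $S(E)$. The inclusion $C(E)\subseteq C(S(E))$ is routine: for $c\in C(E)$ and $w\in S(E)$ the meets $w\wedge c,\,w\wedge c'$ exist and are sharp (a meet of sharp elements that exists in $E$ is sharp, by complementing the join statement for $S(E)$), and the decomposition $w=(w\wedge c)\vee(w\wedge c')$ then takes place inside $S(E)$ by fullness. The reverse inclusion $C(S(E))\subseteq C(E)$ is the delicate one and is where S-dominancy is essential: given $c$ central in $S(E)$ one must promote $x=(x\wedge c)\vee(x\wedge c')$ from sharp $x$ to arbitrary $x\in E$, exploiting that $x\wedge c,\,x\wedge c'$ exist, that they lie below the orthogonal sharp pair $\widehat{x}\wedge c,\,\widehat{x}\wedge c'$, and that $\widehat{x}=(\widehat{x}\wedge c)\oplus(\widehat{x}\wedge c')$ splits $\widehat{x}$. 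Granting $C(E)=C(S(E))$, the classical fact that the center of a complete orthomodular lattice is a complete Boolean algebra and a complete (hence bifull) sublattice shows $C(E)$ is a complete Boolean algebra bifull in $S(E)$; combined with the bifullness of $S(E)$ in $E$ from $(i)$ and the transitivity of bifullness, $C(E)$ is bifull in $E$, which gives $(ii)$.

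Parts $(iii)$ and $(iv)$ are then short consequences. Central orthocompleteness is immediate from sharp orthocompleteness, since $C(E)\subseteq S(E)$ means any system bounded by an orthogonal central system is bounded by an orthogonal sharp system. For central dominancy I would use $(ii)$: for $x\in E$ the set $A=\{c\in C(E)\mid x\le c\}$ contains $1$, so by completeness and bifullness of $C(E)$ the meet $\bigwedge_{C(E)}A=\bigwedge_{E}A$ exists, lies in $C(E)$, dominates $x$, and is by construction the least central element above $x$. For $(iv)$, $C(E)$ is an atomic complete Boolean algebra, so the join of its atoms equals its top $1$; transporting this join to $E$ via the bifullness from $(ii)$ yields $\bigvee_E\{p\in C(E)\mid p\text{ atom of }C(E)\}=1$.

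The hard part will be the inclusion $C(S(E))\subseteq C(E)$ inside $(ii)$: lifting the central decomposition from the orthomodular lattice $S(E)$ to all of $E$ is exactly the step at which the full strength of S-dominancy (existence of $x\wedge w$ for every $x\in E$, $w\in S(E)$) must be used, and everything else in $(ii)$–$(iv)$ is comparatively formal once this identification is in hand.
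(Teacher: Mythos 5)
Your part $(i)$ is correct and is essentially the paper's own argument: apply sharp orthocompleteness to an orthogonal family of sharp elements (each bounded by itself), use Theorem~\ref{shdisbifull} to pull the join down into $S(E)$, and invoke \cite[Theorem~2.6]{ZR60} for completeness. Your derivations of central orthocompleteness, of central dominancy from $(ii)$, and of $(iv)$ also match the paper. The problem is your part $(ii)$.

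There you reduce everything to the identification $C(E)=C(S(E))$, and you explicitly defer the inclusion $C(S(E))\subseteq C(E)$ as ``the hard part,'' offering only a sketch. This inclusion is not merely unproven; it is false, so the gap cannot be closed along these lines. Concretely, let $E$ be the horizontal sum ($0$--$1$ pasting) of the Boolean algebra $\{0,w,w',1\}$ and the chain $\{0,x,1\}$ with $x\oplus x=1$; as a poset $E$ is the diamond whose three atoms $w$, $w'$, $x$ are also coatoms. This is a finite lattice effect algebra, hence sharply orthocomplete and S-dominating. Here $S(E)=\{0,w,w',1\}$ is Boolean, so $C(S(E))=S(E)$; but $x\wedge w=x\wedge w'=0$, so $(x\wedge w)\vee(x\wedge w')=0\neq x$ and $w\notin C(E)$, i.e.\ $C(E)=\{0,1\}$, properly smaller than $C(S(E))$. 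This example also shows exactly where your proposed ``promotion'' step breaks: $\widehat{x}=1=(\widehat{x}\wedge w)\oplus(\widehat{x}\wedge w')$ does split, and $x\wedge w$, $x\wedge w'$ do exist and lie below this orthogonal sharp pair, yet their join is $0$, not $x$. Since your proofs of central dominancy in $(iii)$ and of $(iv)$ rest on the completeness and bifullness of $C(E)$ obtained in $(ii)$, they inherit this gap.

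The paper's proof of $(ii)$ avoids the center of $S(E)$ altogether. It uses only: $C(E)\subseteq S(E)$; the completeness and bifullness of $S(E)$ from $(i)$, which give that $\bigvee_{S(E)}Q=\bigvee_{E}Q$ exists for every $Q\subseteq C(E)$; and the known fact that $C(E)$ is \emph{full} in $E$ (an existing join in $E$ of central elements is itself central, see \cite{GrFoPu,foulis2009}), which is what certifies $\bigvee_{E}Q\in C(E)$. Meets are then handled by de Morgan, since $Q'\subseteq C(E)$. If you replace your $C(S(E))$ detour by this fullness-of-the-center argument, the rest of your $(ii)$--$(iv)$ goes through as written.
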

\begin{proof}  $(i)$: From \cite[Theorem 2.6]{gudder2} we know that 
$S(E)$ is an orthomodular lattice and a sub-lattice effect 
algebra of $E$.

Let us show that $S(E)$ is orthocomplete.
Let $S\subseteq S(E)$, $S$ orthogonal. Then for every finite 
$F\subseteq S$ we have that 
$\bigoplus_{E} F=\bigvee_{E} F=\bigvee_{S(E)} F\in S(E)$. 
Moreover, for any $s\in S$, $s\leq s$. Since $S(E)$ is bifull in $E$ 
by Theorem \ref{shdisbifull} and $E$ is  sharply orthocomplete 
we have that $\bigoplus_{E} S=\bigvee_{E} S=\bigvee_{S(E)} S\in S(E)$ exists. Since   $S(E)$ is an Archimedean lattice effect algebra we have 
from \cite[Theorem 2.6]{ZR60} that   $S(E)$ is complete. 

\medskip

\noindent{}$(ii)$: 
As $C(E)=\{x\in E \mid y=(y\wedge x)\vee (y\wedge x')\ \mbox{for}\ 
\mbox{every}\ y\in E\}$, we obtain that 
$1=x\vee x'$ for every $x\in C(E)$ and by de Morgan Laws 
$0=x\wedge x'$ for every $x\in C(E)$. 
Hence $C(E)\subseteq S(E)$. It follows by $(i)$ that, 
for any  $Q\subseteq C(E)$, there exists 
$\bigvee_{S(E)} Q=\bigvee_{E} Q\in C(E)$ because $C(E)$ 
is full in E, hence $\bigvee_{C(E)} Q=\bigvee_{E} Q$. 
By de Morgan Laws there exists 
$\bigwedge_{E} Q=(\bigvee_{E} Q')'$, where evidently 
$Q'=\{q'\in E\mid q\in Q\}\subseteq C(E)$. Hence 
$\bigwedge_{E} Q\in C(E)$ which gives 
$\bigwedge_{C(E)} Q=\bigwedge_{E} Q$ (see also \cite{foulis2009}).

\medskip

\noindent{}$(iii)$: Let $x\in E$.  Using $(ii)$ let us put 
${c}_x=\bigwedge_{C(E)}\{c\in C(E) \mid x\leq c\}\in C(E)$. 
Since $C(E)$ is bifull  in $E$ we have that 
${c}_x=\bigwedge_{E}\{c\in C(E) \mid x\leq c\}$ (see again \cite{foulis2009}). 
Since $C(E)\subseteq S(E)$ we immediately obtain that 
$E$ is centrally orthocomplete.

\medskip

\noindent{}$(iv)$: Since $C(E)$ is an atomic Boolean 
algebra we have 
$\bigvee_{C(E)}\{p\in C(E) \mid p\ \mbox{atom of}\ C(E)\}=1$. 
As $C(E)$ is bifull  in $E$, we have that
$\bigvee_{E}\{p\in C(E) \mid p\ \mbox{atom of}\ C(E)\}=
\bigvee_{C(E)}\{p\in C(E) \mid p\ \mbox{atom of}\ C(E)\}=1$.
\end{proof}

\section{Sharply orthocomplete lattice ef\/fect algebras}

M. Kalina in \cite{kalina} has shown that even 
in an Archimedean atomic lattice effect algebra $E$ with atomic center $C(E)$ the 
join of atoms of $C(E)$ computed in $E$ need not be equal to $1$. 
Next examples and theorems show 
connections between sharp orthocompleteness, sharp dominancy and 
completeness of an effect algebra $E$ as well as bifullness 
of $S(E), C(E)$ and atomic 
blocks in a lattice effect algebra $E$.

\begin{example}
{\itshape  Example of a compactly 
generated sharply orthocomplete $MV$-effect algebra  
that is not  complete.}%

\medskip

{\rm It is enough to take the Chang $MV$-effect algebra 
$E=\{0, a, 2a, 3a, \dots, (3a)', (2a)', a', 1\}$ that is not Archimedean 
(hence non-complete), it is  compactly 
generated (every $x\in E$ is compact) and obviously 
sharply orthocomplete (the center $C(E)=S(E)$ is trivial) and hence sharply dominating.}
\end{example}

\begin{example}{\itshape  Example of a sharply dominating 
Archimedean atomic lattice $MV$-effect algebra 
$E$ with complete and bifull $S(E)$ that is not  sharply orthocomplete.}

\medskip
{\rm Let 
${E}= \prod \{\{0_n, a_n, 1_n\}\mid n=1, 2, \dots\}$ and let  
$${E}_{0}= \{(x_n)_{n=1}^{\infty}\in E \mid x_k=a_k\ 
\mbox{for at most}\ \mbox{finitely many} \
k\in \{1, 2, \dots \}\}.$$  
Then $ {E}_{0}$ is a sub-lattice effect algebra of $E$
(hence it is an $MV$-effect algebra), evidently sharply dominating  and 
it is not  sharply orthocomplete  (since it is non-complete). 

${S({E}_{0})}= \prod \{\{0_n, 1_n\}\mid n=1, 2, \dots\}$  is a 
complete Boolean algebra and 
${S({E}_{0})}={C({E}_{0})}$ is a bifull sub-lattice 
of ${E}_{0}$. }
\end{example}

\begin{lemma}\label{soaac} Let $E$ be a sharply orthocomplete  Archimedean 
atomic $MV$-effect algebra. Then  $E$ is complete.
\end{lemma}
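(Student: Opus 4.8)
The plan is to upgrade the hypothesis from sharp orthocompleteness to full orthocompleteness and then invoke \cite[Theorem~2.6]{ZR60}: since $E$ is Archimedean, once $E$ is shown to be orthocomplete it is automatically a complete lattice. Thus the whole task reduces to proving that an \emph{arbitrary} orthogonal system $G=(x_\kappa)_{\kappa\in H}$ has a sum $\bigoplus G$.

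The difficulty is that sharp orthocompleteness only controls those orthogonal systems that admit a dominating orthogonal system of sharp elements, and a system such as $(a,a)$ for an atom $a$ of order $\geq 2$ living in a single Łukasiewicz chain has no such sharp dominator. The idea that circumvents this is to re-coordinatize $G$ through the atoms before applying the hypothesis. I would use the known structure of Archimedean atomic lattice effect algebras: for every atom $p$ the element $\widehat p=n_pp$ (with $n_p=\ord(p)$) is sharp, and every $x\in E$ has the atomic decomposition $x=\bigvee_E\{k_p(x)\,p\mid p\ \text{atom}\}$, where $k_p(x)=\max\{k\mid kp\leq x\}\leq n_p$; moreover $x\leq y$ iff $k_p(x)\leq k_p(y)$ for all $p$, and on orthogonal sums the coordinates add, $k_p(\bigoplus F)=\sum_{\kappa\in F}k_p(x_\kappa)$. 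Because $E$ is an $MV$-effect algebra all elements are compatible, so $S(E)$ is a \emph{Boolean} algebra, and the covers $\widehat p$ of distinct atoms are distinct atoms of $S(E)$, hence pairwise orthogonal.

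With this in hand the construction is as follows. For each atom $p$ put $m_p=\sum_{\kappa\in H}k_p(x_\kappa)$; since every finite partial sum $\bigoplus_{\kappa\in F}x_\kappa$ exists we get $\sum_{\kappa\in F}k_p(x_\kappa)\leq n_p$, and as $n_p$ is finite this yields $m_p\leq n_p$. Consider the family $(m_p\,p)_p$. It is orthogonal, each term being supported on a single coordinate, and it is dominated term-by-term by the family $(\widehat p)_p$, which is an orthogonal system of sharp elements (the $\widehat p$ are distinct atoms of the Boolean algebra $S(E)$, hence pairwise orthogonal, and finite $\oplus$-sums are inherited from $S(E)$). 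Hence sharp orthocompleteness applies to $(m_p\,p)_p$ and produces $y:=\bigoplus_p m_p\,p=\bigvee_E\{m_p\,p\}$ in $E$. Finally I would check that $y$ is exactly the sum of the original system: coordinate monotonicity gives $\bigoplus_{\kappa\in F}x_\kappa\leq y$ for every finite $F$, so $y$ is an upper bound of the partial sums, and if $z$ is any upper bound then $m_p=\sup_F\sum_{\kappa\in F}k_p(x_\kappa)\leq k_p(z)$ for every $p$, whence $y\leq z$. Thus $\bigoplus G=y$ exists, $E$ is orthocomplete, and \cite[Theorem~2.6]{ZR60} finishes the proof.

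The main obstacle I anticipate is purely the atomic bookkeeping that makes the above rigorous: establishing (or citing) the coordinatization $x\mapsto(k_p(x))_p$ together with its additivity on orthogonal sums and its order-reflecting property, and confirming that the sharp covers $\widehat p$ of distinct atoms are genuinely orthogonal in $E$. Once these facts about Archimedean atomic $MV$-effect algebras are secured, the role of the sharp-orthocompleteness hypothesis is confined to the single clean step of producing the join $\bigvee_E\{m_p\,p\}$.
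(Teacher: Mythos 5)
Your argument is correct in substance, but it takes a genuinely different route from the paper's. The paper never proves orthocompleteness of $E$: it cites the subdirect-product theorem \cite[Theorem~3.1]{ZR71} to obtain the lattice effect algebra embedding $\varphi\colon E\to\prod\{[0,n_aa]\mid a\in A\}$, $\varphi(x)=(x\wedge n_aa)_{a\in A}$, and then uses sharp orthocompleteness exactly once --- applied, just as you do, to a family $(k_aa)_{a\in A}$ dominated by the orthogonal sharp family $(n_aa)_{a\in A}$ --- to show that $\varphi$ is onto. Hence $E$ is isomorphic to the full product of the finite chains $[0,n_aa]$, which is complete; this gives a stronger structural conclusion (an isomorphism with a direct product of finite chains) than completeness alone. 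You instead verify that an \emph{arbitrary} orthogonal system has a sum, by aggregating the atomic coordinates $m_p=\sum_{\kappa}k_p(x_\kappa)$, applying sharp orthocompleteness to $(m_pp)_p$, and then invoking \cite[Theorem~2.6]{ZR60}. The trade-off: your route makes orthocompleteness explicit as an intermediate result, but it needs both \cite[Theorem~2.6]{ZR60} and the coordinate calculus (additivity of $k_p$ on orthogonal sums, order-reflection), and that calculus is essentially the content of the very theorem of \cite{ZR71} that the paper cites, so you do not actually economize on prerequisites. One presentational point: your closing step ``whence $y\leq z$'' implicitly requires $k_p(y)\leq m_p$, which you have not established; the clean fix is to argue that each finite subsum $\bigoplus_{p\in F}m_pp$ has $q$-th coordinate equal to $m_q$ for $q\in F$ and $0$ otherwise, hence lies below $z$ by order-reflection, and $y$ is by definition the join of these subsums.
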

\begin{proof} Let $A\subseteq E$ be a set of all atoms of $E$. 
Then $1=\bigvee_{E} \{n_a a | a\in A\}=%
\bigoplus_{E} \{n_a a | a\in A\}$, $n_a a\in C(E)=S(E)$ are 
atoms of $C(E)$ for all $a\in A$. By \cite[Theorem 3.1]{ZR71}
we have that  $E$ is isomorphic to a~subdirect product
of the family $\{ [0, n_a a] \mid a\in A\}$. 
The corresponding lattice effect algebra embedding $\varphi:E \to \prod\{ [0, n_a a] \mid a\in A\}$ 
is given by $\varphi(x)=(x\wedge n_a a)_{a\in A}$.

Let us check that 
$E$ is isomorphic to $\prod\{ [0, n_a a] \mid a\in A\}$. 
It is enough to check that $\varphi$ is onto. Let 
$(x_a)_{a\in A}\in \prod\{ [0, n_a a] \mid a\in A\}$. Then 
$(x_a)_{a\in A}$ is an orthogonal system and $x_a=k_a a\leq n_a a\in S(E)$ for 
all $a\in A$. Hence 
$x=\bigoplus_{E} \{x_{a} \mid a\in A\}=%
\bigvee_{E} \{k_a a  \mid a\in A\}\in E$ exists. 
Evidently, $\varphi(x)=(x\wedge n_a a)_{a\in A}=(k_a a)_{a\in A}=(x_a)_{a\in A}$.
\end{proof}

\begin{example}
{\itshape  Example of a sharply orthocomplete Archimedean $MV$-effect algebra  
that is not  complete.}%

\medskip

{\rm If we omit in Lemma \ref{soaac} the assumption of atomicity in $E$ 
it is enough to take the  $MV$-effect algebra 
$E=\{f:[0, 1]\to [0, 1] \mid f\ \mbox{continuous}\ \mbox{function}\}$ which is 
a sub-lattice effect algebra of a direct product of copies of the standard 
$MV$-effect algebra of real numbers  $[0, 1]$
that is  Archimedean, 
sharply orthocomplete (the center $C(E)=S(E)=\{0, 1\}$ is 
trivial) and hence sharply dominating. 
Moreover, $E$ is not complete.}
\end{example}

It is well known that an Archimedean  lattice effect algebra $E$
is complete if and only if every block of $E$ 
is complete (see \cite[Theorem 2.7]{ZR60}). If moreover $E$ is atomic 
then $E$ may have atomic as well non-atomic blocks \cite{beltrametti}. 
K. Mosn\'a\ \cite[Theorem 8]{mosna} has proved that in this case 
$E=\bigcup\{M\subseteq E\mid M \ \mbox{atomic\ \mbox{block}\ \mbox{of}\ 
E}\}$. 

Hence every non-atomic block of $E$ is covered by atomic ones. Moreover, 
many properties of  Archimedean  atomic lattice effect algebras as 
well as their non-atomic blocks depend on properties of their 
atomic blocks. 

Namely, the center $C(E)$, the compatibility center $B(E)$ 
and the set $S(E)$ of sharp elements of 
Archimedean  atomic lattice effect algebras $E$ can be expressed 
by set-theoretical operations on  their 
atomic blocks. As follows, 
$B(E)=\bigcap\{M\subseteq E\mid M \ \mbox{atomic\ \mbox{block}\ \mbox{of}\ 
E}\}$,  
$S(E)=\bigcup\{C(M)\mid M\subseteq E, M \ \mbox{atomic\ \mbox{block}\ \mbox{of}\ 
E}\}$ and $C(E)=B(E)\cap S(E)$ (see \cite{mosna}).

For instance, an Archimedean  atomic lattice effect algebra $E$ 
is sharply dominating iff every atomic block of $E$ 
is sharply dominating (see \cite{kopr}).
Moreover, we can prove the following:

\begin{theorem}\label{cahract}
Let $E$ be an Archimedean atomic lattice effect algebra. 
Then the following conditions are 
equivalent: 
\begin{enumerate}\itemsep=0pt
\item[$(i)$] $E$ is complete.
\item[$(ii)$]  Every atomic block of $E$ is complete.
\end{enumerate}
In this case every block of $E$ is complete.
\end{theorem}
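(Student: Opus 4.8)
The plan is to reduce everything to the two results of \cite{ZR60} quoted just above: that an Archimedean lattice effect algebra is complete if and only if each of its blocks is complete (Theorem~2.7), and that an orthocomplete Archimedean lattice effect algebra is complete (Theorem~2.6). Granting these, the implication $(i)\Rightarrow(ii)$ is immediate, since if $E$ is complete then Theorem~2.7 of \cite{ZR60} makes every block complete, in particular every atomic block; the same remark proves the closing clause, because the completeness of $E$ asserted by $(i)$ forces every block to be complete. All the substance therefore lies in $(ii)\Rightarrow(i)$.

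For $(ii)\Rightarrow(i)$ I would use Theorem~2.7 of \cite{ZR60} in the opposite direction and prove that every block $M$ of $E$ is complete; the atomic blocks are complete by hypothesis, so I fix a non-atomic block $M$. As a sub-lattice effect algebra of $E$ it is again Archimedean, so by Theorem~2.6 of \cite{ZR60} it is enough to show that $M$ is orthocomplete. Given an orthogonal system $(x_\kappa)_{\kappa\in H}$ in $M$, its finite partial sums form an up-directed family $T=\{\bigoplus K\mid K\subseteq H\ \mbox{finite}\}$ of pairwise compatible elements of $M$, and orthocompleteness amounts to the existence of $\bigvee_E T$. I would first dispose of the routine half: if $\bigvee_E T$ exists in $E$ then it lies in $M$, since each element of $T$ is compatible with every element of $M$ and, by the known behaviour of blocks under existing joins (an existing join in $E$ of elements of a block is again compatible with the whole block), $\bigvee_E T$ is compatible with all of $M$ and hence, by maximality of the block, belongs to $M$, where it serves as the sum $\bigoplus\{x_\kappa\mid\kappa\in H\}$.

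The main obstacle is the \emph{existence} of $\bigvee_E T$, for $M$ is non-atomic and the members of $T$ need not lie in any single atomic block. To overcome this I would invoke the structural fact recalled before the statement: by K.~Mosn\'a's theorem \cite{mosna} the atomic Archimedean lattice effect algebra $E$ is the union of its atomic blocks, so $M$ is covered by atomic blocks $N$, each of which is complete by $(ii)$. The idea is to bound the directed family $T$ from above by sharp elements read off inside these covering blocks, using that in each complete atomic block the relevant orthogonal sums and their sharp envelopes exist, and then to assemble these local bounds into a least upper bound of $T$ in $E$. Once $\bigvee_E T$ is produced, $M$ is orthocomplete, Theorem~2.6 of \cite{ZR60} makes $M$ complete, and a final application of Theorem~2.7 of \cite{ZR60} upgrades this to completeness of $E$ and to completeness of every block. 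I expect the genuinely hard step to be this assembling: showing that the upper bounds extracted from the separate atomic blocks covering $M$ cohere into a single supremum in $E$, rather than merely controlling $T$ block by block.
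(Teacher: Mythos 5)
Your argument for $(ii)\Rightarrow(i)$ has a genuine gap exactly where you yourself flag it: the ``assembling'' of upper bounds extracted from different covering atomic blocks into a least upper bound of $T$ in $E$ is never carried out, and the plan as stated gives no mechanism for doing so. The finite partial sums of an orthogonal system inside a non-atomic block $M$ can be scattered across many atomic blocks, and knowing a bound for $T$ inside each such block separately does not yield an element of $E$ that is a supremum of $T$; compatibility and sharp-envelope considerations alone do not force these local bounds to cohere. So the existence of $\bigvee_E T$ --- which you correctly identify as the whole substance of the theorem --- remains unproved, and the reduction via Theorem~2.7 of \cite{ZR60} to completeness of every individual block never gets off the ground.

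The paper's proof avoids this obstacle by a different decomposition, which is the idea you are missing: it never treats blocks one at a time, but instead takes an arbitrary $\bigoplus$-orthogonal system $G$ in $E$ and decomposes its \emph{elements} into atoms. Since $E$ is Archimedean and atomic, each $x\in G$ can be written as $x=\bigoplus_E\{k_a a\mid a\in A_x\}$ for a set $A_x$ of atoms; orthogonality of $G$ makes $A_G=\bigcup\{A_x\mid x\in G\}$ an orthogonal set of atoms, which extends to a \emph{maximal} orthogonal set $A$ of atoms, and $A$ is contained in a single atomic block $M$. Because atomic blocks of Archimedean atomic lattice effect algebras are bifull in $E$ (the second ingredient you lack, cited from \cite{PR6}), each $x\in G$, being a join of finite sums of elements of $M$, itself lies in $M$; so the \emph{entire} system $G$ sits inside the one atomic block $M$, which is complete by hypothesis $(ii)$, whence $\bigoplus_M G$ exists, and bifullness again gives $\bigoplus_E G=\bigoplus_M G$. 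Thus $E$ is orthocomplete and Theorem~2.6 of \cite{ZR60} makes $E$ complete directly --- no appeal to Theorem~2.7 of \cite{ZR60} and no separate treatment of non-atomic blocks is needed for this direction. Your remarks on $(i)\Rightarrow(ii)$ and on the final clause (blocks of a complete $E$ are full sub-lattice effect algebras, hence complete) are fine.
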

\begin{proof}  $(i)\implies (ii)$: This is trivial, as every 
block $M$ of $E$ is a full sub-lattice effect algebra of  $E$.

\noindent{}$(ii) \implies  (i)$: It is enough to show that $E$ 
is orthocomplete.  From \cite[Theorem 2.6]{ZR60} we then get that 
$E$ is complete.

Let $G\subseteq E$ be a $\bigoplus$-orthogonal 
system. Then, for every $x\in G$, there is a set $A_x$ of atoms 
of $E$ and positive integers 
$k_{a}$, $a\in A_x$ such that such that 
$x=\bigoplus_{E} \{k_a a \mid a\in A_x\}$. Moreover, for any 
$F\subseteq G$ finite we have that $\bigcup\{ A_x \mid x\in F\}$ 
is an orthogonal set of atoms. Hence 
$A_G= \bigcup\{ A_x \mid x\in G\}$ is an orthogonal set of atoms of $E$ 
and there is a maximal orthogonal set $A$ of atoms of $E$ such that 
$A_G\subseteq A$. Therefore there is an atomic block $M$ of $E$ 
with $A\subseteq M$. By assumption $\bigoplus_{M} G$ exists and 
$\bigoplus_{M} G=\bigoplus_{E} G$, as $M$  is bifull in $E$ 
because $E$ is Archimedean and atomic (see \cite{PR6}).
\end{proof}

\begin{theorem}\label{rexsoaacea} 
Let $E$ be a sharply orthocomplete lattice effect algebra. 
Then  
\begin{enumerate}\itemsep=0pt
\item[$(i)$] $S(E)$ is a complete orthomodular lattice  bifull  in $E$. 
\item[$(ii)$] $C(E)$ is a complete Boolean algebra  bifull  in $E$. 
\item[$(iii)$] $E$ is sharply dominating, centrally dominating and 
S-dominating.
\item[$(iv)$] If moreover $E$ is Archimedean and atomic then 
$E$ is a complete lattice effect algebra.
\end{enumerate}
\end{theorem}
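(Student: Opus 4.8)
The plan is to reduce everything to the already-proved Theorem~\ref{shoiscc} by first showing that a sharply orthocomplete lattice effect algebra is automatically \emph{sharply dominating}. Since $E$ is a lattice, the meet $x\wedge w$ exists for every $x\in E$ and $w\in S(E)$, so once sharp dominancy is in hand $E$ is even \emph{S-dominating}; then parts $(i)$ and $(ii)$, together with the centrally dominating claim, are immediate instances of Theorem~\ref{shoiscc}$(i)$, $(ii)$, $(iii)$, and $(iii)$ is completed by recalling sharp dominancy and S-dominancy themselves.

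The heart of the argument is therefore the construction, for a fixed $x\in E$, of the greatest sharp element below $x$. First I would invoke Zorn's lemma to pick a \emph{maximal} orthogonal family $(w_\kappa)_{\kappa\in H}$ of sharp elements with $w_\kappa\le x$; orthogonality being a finitary property (the union of a chain of such families has every finite subfamily inside one member, hence remains orthogonal), a maximal one exists. Taking $x_\kappa=w_\kappa\le w_\kappa\in S(E)$ in Definition~\ref{shoc}, sharp orthocompleteness yields the sum $s:=\bigoplus_E\{w_\kappa\mid\kappa\in H\}=\bigvee_E\{\,\bigoplus_E\{w_\kappa\mid\kappa\in F\}\mid F\subseteq H\text{ finite}\}$. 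Because $S(E)$ is a full sub-lattice effect algebra of any lattice effect algebra (\cite{ZR57}), each finite partial sum is sharp and coincides with the corresponding finite join, so every partial sum is $\le x$ and hence $s\le x$; moreover $s$, being a join in $E$ of elements of $S(E)$, lies in $S(E)$.

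It then remains to see that $s$ dominates every sharp $v\le x$. For such a $v$ the element $v\wedge s'$ is sharp, lies below $x$, and satisfies $v\wedge s'\le s'=(\bigoplus_E w_\kappa)'$, so adjoining it keeps the family orthogonal; maximality forces $v\wedge s'=0$, whence $v\le s$ in the orthomodular lattice $S(E)$. Thus $s=\bigvee_{S(E)}\{w\in S(E)\mid w\le x\}=\bigvee_{E}\{w\in S(E)\mid w\le x\}$ is the greatest sharp element below $x$, which is exactly the (dual) condition in the definition of sharp dominancy, so $E$ is sharply dominating. I expect this maximal-family step --- in particular the verification that the resulting orthogonal sum is again sharp --- to be the main obstacle, which is why I lean on the fullness of $S(E)$ from \cite{ZR57} rather than trying to check $s\wedge s'=0$ directly.

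For $(iv)$, with $E$ also Archimedean and atomic, the plan is to pass to blocks. By Theorem~\ref{cahract} it suffices to show that every atomic block $M$ is complete, and each such $M$ is an Archimedean atomic $MV$-effect algebra, so by Lemma~\ref{soaac} I only need $M$ to be sharply orthocomplete. This I would obtain from sharp orthocompleteness of $E$: given a system in $M$ dominated by an orthogonal family in $S(M)=C(M)\subseteq S(E)$ (here $S(M)=S(E)\cap M$ since $M$ is a sub-lattice, and $C(M)=S(M)$ as $M$ is an $MV$-effect algebra), the sum exists in $E$ by Definition~\ref{shoc}, its finite partial sums lie in $M$, and since $M$ is bifull in $E$ (\cite{PR6}) that sum is already computed inside $M$. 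Hence every atomic block is complete, and $E$ is complete.
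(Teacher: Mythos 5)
Your overall architecture coincides with the paper's: establish that sharp orthocompleteness forces sharp dominancy via a maximal orthogonal family of sharp elements, note that a sharply dominating lattice effect algebra is automatically S-dominating (meets always exist in a lattice), then quote Theorem~\ref{shoiscc} and Theorem~\ref{shdisbifull} for $(i)$--$(iii)$; for $(iv)$, pass to atomic blocks, which are sharply orthocomplete Archimedean atomic $MV$-effect algebras, hence complete by Lemma~\ref{soaac}, so Theorem~\ref{cahract} applies. Your treatment of $(iv)$ is actually more careful than the paper's, which merely asserts that atomic blocks inherit sharp orthocompleteness; your justification via $S(M)=S(E)\cap M$ and bifullness of atomic blocks from \cite{PR6} is exactly the missing detail.

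However, the crucial step --- that the sum $s$ of the maximal family dominates every sharp $v\le x$ --- has a genuine gap. Maximality does force $v\wedge s'=0$, but the inference ``$v\wedge s'=0$, whence $v\le s$ in the orthomodular lattice $S(E)$'' is invalid: in an orthomodular lattice, $a\wedge b'=0$ does not imply $a\le b$ unless $a$ and $b$ commute. (In the horizontal sum of two four-element Boolean algebras, with atoms $a,a',b,b'$, one has $a\wedge b'=0$ but $a\not\le b$; this implication is precisely what separates Boolean algebras from general OMLs, and nothing in your setup makes $v$ and $s$ commute.) The information you extract from maximality is strictly weaker than what is needed. The repair is exactly the paper's argument: instead of adjoining $v\wedge s'$, adjoin $z=(v\vee s)\ominus s$, which in $S(E)$ equals $(v\vee s)\wedge s'$. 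This $z$ is sharp because $S(E)$ is a sub-effect algebra of $E$ (closed under $\ominus$ of comparable sharp elements), it lies below $v\vee s\le x$, and it satisfies $z\le s'$, hence it is orthogonal to the whole family; maximality then forces $z=0$, which says literally that $v\vee s=s$, i.e.\ $v\le s$, with no commutation hypothesis. With this one substitution your proof goes through and is essentially the paper's.
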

\begin{proof} $(i), (iii)$:  Let $S\subseteq S(E)$, $S$ orthogonal. Then, for 
any $s\in S$, $s\leq s$. Hence  (since $S(E)$ is full in $E$)
$\bigoplus_{E} S=\bigvee_{E} S=\bigvee_{S(E)} S\in S(E)$ exists. 
Since   $S(E)$ is an Archimedean lattice effect algebra we have 
from \cite[Theorem 2.6]{ZR60} 
that   $S(E)$ is complete.  Moreover, let  $x\in E$ and let 
$G=(w_{\kappa})_{\kappa\in H}$, $w_{\kappa} \in S(E), {\kappa\in H}$ 
be a maximal orthogonal system of mutually different elements such that 
$w_x=\bigoplus_{E} \{w_{\kappa} \mid \kappa\in H\}\leq x$. 
Let us show that $y\in S(E)$, 
$y\leq x$ $\Longrightarrow $ $y\leq w_x\in S(E)$. Clearly, 
$w_x\in S(E)$. Assume that $y\not\leq w_x$. Then 
$w_x < y\vee w_x \leq x$. Hence $z=(y\vee w_x)\ominus w_x \not =0$ 
and $G\cup \{z\}$ is an 
orthogonal system of mutually different elements such that 
$y\vee w_x=w_x\oplus z=\bigoplus_{E} \{w_{\kappa} \mid \kappa\in H\}\oplus z\leq x$, 
a contradiction with the maximality of $G$. Therefore 
$y\leq w_x$ and $E$ is {sharply dominating}, hence 
S-dominating and from Theorem \ref{shoiscc} we get that 
$E$ is centrally dominating. From Theorem \ref{shdisbifull},  
we get that $S(E)$ is bifull  in $E$.

\medskip

\noindent $(ii)$: It follows from $(i), (iii)$ and Theorem \ref{shoiscc}.

\medskip

\noindent $(iv)$: Assume now that $E$ is a sharply orthocomplete 
Archimedean atomic lattice effect algebra. Then every atomic block $M$ 
of $E$ is sharply orthocomplete Archimedean atomic $MV$-effect algebra 
and hence it is a complete  $MV$-effect algebra by Lemma \ref{soaac}. 
By Theorem \ref{cahract}, $E$ is a complete  lattice effect algebra.
\end{proof}

\begin{theorem}\label{soaacea} 
Let $E$ be an atomic lattice effect algebra. 
Then the following conditions are 
equivalent: 
\begin{enumerate}\itemsep=0pt
\item[$(i)$] $E$ is complete.
\item[$(ii)$] $E$ is Archimedean and sharply orthocomplete.
\end{enumerate}
\end{theorem}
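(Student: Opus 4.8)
The plan is to prove the two implications separately, observing that the harder direction is largely reduced to results already established. For $(ii)\Rightarrow(i)$ there is essentially nothing new to do: if $E$ is atomic and, by $(ii)$, Archimedean and sharply orthocomplete, then $E$ is a sharply orthocomplete lattice effect algebra which is in addition Archimedean and atomic, so Theorem \ref{rexsoaacea}$(iv)$ applies verbatim and yields that $E$ is a complete lattice effect algebra. Thus the whole burden of the argument lies in $(i)\Rightarrow(ii)$.

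I would split $(i)\Rightarrow(ii)$ into two claims: that completeness forces the Archimedean property, and that completeness forces sharp orthocompleteness. For the first, suppose for contradiction that $E$ is not Archimedean, i.e.\ some nonzero $x\in E$ has $\ord(x)=\infty$, so that $nx$ exists for every positive integer $n$ and $x\le 2x\le 3x\le\cdots$. By completeness $v=\bigvee_{E}\{nx\mid n\geq 1\}$ exists, and since $x\le v$ the difference $v\ominus x$ is defined. For every $n$ we have $(n+1)x\le v$, which is equivalent to $nx\le v\ominus x$; hence $v\ominus x$ is an upper bound of $\{nx\mid n\geq 1\}$, so $v\le v\ominus x\le v$ and therefore $v\ominus x=v$. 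Then $x\oplus v=v=0\oplus v$, and cancellation in $E$ gives $x=0$, a contradiction. Hence $E$ is Archimedean.

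For the second claim I would first note that $E$, being a complete lattice, is orthocomplete: for any orthogonal system $G=(x_\kappa)_{\kappa\in H}$ every finite partial sum $\bigoplus_{E}K$ exists by orthogonality, and the join $\bigvee_{E}\{\bigoplus_{E}K\mid K\subseteq G\ \text{finite}\}$ exists automatically in the complete lattice, which is precisely the requirement for $\bigoplus_{E}G$ to exist. Sharp orthocompleteness is then immediate: the systems $(x_\kappa)_{\kappa\in H}$ occurring in Definition \ref{shoc} are dominated elementwise by an orthogonal system $(w_\kappa)_{\kappa\in H}$ of sharp elements, hence each finite subsystem is orthogonal (domination by a finite orthogonal system preserves existence of the finite sum), so $(x_\kappa)_{\kappa\in H}$ is itself orthogonal and is therefore summed by the orthocomplete $E$. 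Combining this with the Archimedean property established above gives $(ii)$.

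The only genuinely non-routine step is showing that completeness implies the Archimedean property; everything else is bookkeeping, since orthocompleteness of a complete lattice effect algebra and the inclusion of the sharply dominated systems among all orthogonal systems are both direct consequences of the definitions. I expect the cancellation argument producing $v\ominus x=v$ to be the crux, and I would be careful to justify the equivalence $(n+1)x\le v\iff nx\le v\ominus x$ from the basic order and difference properties of effect algebras before concluding.
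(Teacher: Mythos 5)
Your proposal is correct, and its skeleton matches the paper's: both dispose of $(ii)\Rightarrow(i)$ by invoking Theorem \ref{rexsoaacea}$(iv)$ verbatim, and both split $(i)\Rightarrow(ii)$ into "complete implies Archimedean" plus "complete implies sharply orthocomplete." The difference is in how the first half of $(i)\Rightarrow(ii)$ is discharged: the paper simply cites \cite[Theorem~3.3]{ZR54} for the fact that a complete lattice effect algebra is Archimedean, whereas you prove it from scratch via the chain $x\le 2x\le 3x\le\cdots$, the equivalence $(n+1)x\le v\iff nx\le v\ominus x$, and cancellation to force $x=0$; this argument is sound (the equivalence and cancellation are standard consequences of the effect algebra axioms, the latter following from uniqueness in $(Eiii)$), and it is essentially the classical proof of the cited result, so you have made the theorem self-contained at the cost of a few lines. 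Similarly, where the paper dismisses sharp orthocompleteness of a complete lattice effect algebra as "evident," you correctly fill in the two needed observations: elementwise domination by an orthogonal system of sharp elements makes each finite subsystem of $(x_\kappa)_{\kappa\in H}$ orthogonal, and completeness then supplies the join of finite partial sums required by Definition \ref{shoc}. In short: same decomposition, same use of Theorem \ref{rexsoaacea}$(iv)$ for the substantive direction, but your version trades the paper's external citation for an elementary inline argument, which buys verifiability without the literature at essentially no loss of brevity.
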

\begin{proof}  $(i)\implies (ii)$: By \cite[Theorem 3.3]{ZR54} 
we have that any complete lattice effect algebra is Archimedean. 
Evidently, any complete lattice effect algebra is 
sharply orthocomplete.

\noindent{}$(ii) \implies  (i)$: It follows from Theorem 
\ref{rexsoaacea},  $(iv)$.
\end{proof}

\subsection*{Acknowledgements}
The work of the first author was supported by the 
Slovak Research and Development
Agency under the contract No. 
APVV-0375-06 and by VEGA grant agency, grant number  1/0373/08. 
The second author gratefully
acknowledges financial support 
of the  Ministry of Education of the Czech Republic
under the project MSM0021622409. The third author 
was supported by the Slovak Research and Development
Agency under the contract No. APVV--0071--06.

\pdfbookmark[1]{References}{ref}
\LastPageEnding


\begin{thebibliography}{99}

\footnotesize\itemsep=0pt


\bibitem{beltrametti} Beltrametti E.G.,  Cassinelli G., The Logic of Quantum Mechanics, Addison-Wesley, Reading, MA, 1981.



\bibitem{BLM}
Busch P., Lahti P.J., Mittelstaedt P.,
The quantum theory of measurement, {\it Lecture Notes in Physics, New Series m: Monographs}, Vol.~2, Springer-Verlag, Berlin, 1991.




\bibitem{C.C.Ch}
Chang C.C., 
Algebraic analysis of many valued logics,
\href{http://dx.doi.org/10.2307/1993227}{{\it Trans. Amer. Math. Soc.}} {\bf 88} (1958), 467--490.


\bibitem{FoBe}
Foulis D.J., Bennett M.K., Ef\/fect algebras and unsharp quantum logics,
\href{http://dx.doi.org/10.1007/BF02283036}{{\it Found. Phys.}} {\bf 24} (1994), 1331--1352.


\bibitem{foulis2009} Foulis D.J., Pulmannov\'a S., 
Type-Decomposition of an Effect Algebra, 
\href{http://dx.doi.org/10.1007/s10701-009-9344-3} {{\it Found. Phys.}}.

\bibitem{GrFoPu}
Greechie R.J., Foulis D.J., Pulmannov\'a S.,
The center of an ef\/fect algebra,
\href{http://dx.doi.org/10.1007/BF01108592}{{\it Order}} {\bf 12} (1995), 91--106.

\bibitem{gudder1}
 Gudder S.P.,
 Sharply dominating ef\/fect algebras,
{\it Tatra Mt. Math. Publ.} {\bf 15} (1998), 23--30.

\bibitem{gudder2}
Gudder S.P., 
$S$-dominating ef\/fect algebras,
\href{http://dx.doi.org/10.1023/A:1026637001130}{{\it Internat. J. Theoret. Phys.}} {\bf 37} (1998), 915--923.

\bibitem{JePu} G. Jen\v{c}a, S. Pulmannov\'a, 
Orthocomplete effect algebras, 
Proc. Amer. Math. Soc. 131 (2003), 2663--2671. 


\bibitem{ZR57}
Jen\v{c}a G., Rie\v{c}anov\'a Z.,
On sharp elements in lattice ordered ef\/fect algebras,
{\it BUSEFAL} {\bf 80}  (1999), 24--29.

\bibitem{kopr} Kalina M., Olej\v{c}ek V., Paseka J. and Rie\v{c}anov\'{a} Z., 
Sharply Dominating $MV$-Effect Algebras, 
\href{http://dx.doi.org/10.1007/s10773-010-0338-x}
{{\it Internat. J. Theoret. Phys.}}, DOI: 10.1007/s10773-010-0338-x.

\bibitem{kalina} Kalina M.,
On central atoms of Archimedean atomic lattice ef\/fect algebras, 
{\it Kybernetika}, accepted.


\bibitem{kalmbach} Kalmbach G., 
Orthomodular lattices, {\it Mathematics and its Applications}, 
Vol.~453, Kluwer Academic Publishers, Dordrecht, 1998, 1998.




\bibitem{Kop2} 
K\^opka F., 
Compatibility  in $D$-posets,
\href{http://dx.doi.org/10.1007/BF00676263}{{\it Internat. J. Theoret. Phys.}} {\bf 34} (1995), 1525--1531. 

\bibitem{kopkabool}
K\^opka F.,  Chovanec F.,
Boolean $D$-posets,
\href{http://dx.doi.org/10.1007/BF00676241}{{\it Internat. J. Theoret. Phys.}} {\bf 34} (1995), 1297--1302.


\bibitem{mosna}
Mosn\'a K.,
Atomic lattice ef\/fect algebras and their sub-lattice ef\/fect algebras,
{\it J. Electrical Engineering} {\bf 58} (2007), 7/S, 3--6.



\bibitem{PR6} Paseka J., Rie\v canov\'a Z.,
The inheritance of BDE-property in sharply dominating  lattice ef\/fect algebras and $(o)$-{continuous} states, {\it Soft Comput.}, DOI: 10.1007/s00500-010-0561-7.



\bibitem{ZR51} 
Rie\v{c}anov\'{a} Z., 
Compatibility and central elements in ef\/fect algebras,
{\it Tatra Mt. Math. Publ.} {\bf 16} (1999), 151--158.

\bibitem{ZR52}
Rie\v{c}anov\'{a} Z.,
Subalgebras, intervals and central elements of generalized ef\/fect algebras,
\href{http://dx.doi.org/10.1023/A:1026682215765}{{\it Internat. J. Theoret. Phys.}} {\bf 38} (1999), 3209--3220.

\bibitem{ZR54} Rie\v{c}anov\'a Z., Archimedean and block-finite 
lattice effect algebras, 
{\it Demonstratio Mathematica} {\bfseries 33} (2000), 443--452.

\bibitem{ZR56}
Rie\v{c}anov\'{a} Z.,
Generalization of blocks for $D$-lattices and lattice-ordered ef\/fect algebras,
\href{http://dx.doi.org/10.1023/A:1003619806024}{{\it Internat. J. Theoret. Phys.}} {\bf 39} (2000),  231--237.

\bibitem{ZR60}
Rie\v{c}anov\'{a} Z.,
Orthogonal sets in ef\/fect algebras,
{\it Demonstratio Math.} {\bf 34} (2001),  525--532.


\bibitem{ZR71}
Rie\v{c}anov\'{a} Z.,
Subdirect decompositions of lattice ef\/fect algebras,
\href{http://dx.doi.org/10.1023/A:1025775827938}{{\it Internat. J. Theoret. Phys.}}  {\bf 42} (2003),   1415--1433.


\end{thebibliography}
\end{document}